\newtheorem{theorem}{Theorem}
\newtheorem{lemma}{Lemma}
\newtheorem{protocol}{Protocol}
\newcommand{\ket}[1]{\left | #1 \right\rangle}
\newcommand{\bra}[1]{\left \langle #1 \right |}
\newcommand{\half}{\mbox{$\textstyle \frac{1}{2}$}}
\newcommand{\smallfrac}[2][1]{\mbox{$\textstyle \frac{#1}{#2}$}}
\newcommand{\Tr}{\text{Tr}}
\newcommand{\identity}{\mathbb{1}}
\renewcommand{\epsilon}{\varepsilon}
\begin{document}

\title{Resources for Entanglement Distribution via the Transmission of Separable States}
\date{\today}

\author{Alastair \surname{Kay}}
\affiliation{Centre for Quantum Technologies, National University of Singapore, 
			3 Science Drive 2, Singapore 117543}
\affiliation{Keble College, Parks Road, Oxford, OX1 3PG, UK}
\begin{abstract}
One of the many bizarre features of entanglement is that Alice, by sending a qubit to Bob in a separable state, can generate some entanglement between herself and Bob. This protocol is stripped down to the bare essentials to better elucidate the key properties of the initial resource state that enable this entanglement distribution. The necessary and sufficient conditions under which the correlations of a Bell-diagonal state serve as a useful resource are proven, giving upper and lower bounds on the entanglement that can be distributed when those conditions are met.
\end{abstract}

\maketitle

What is it about a given quantum state that activates any one of a number of strange quantum features? This study of the resources required to achieve different information processing tasks is at the heart of Quantum Information. Such tasks are naturally specified by the set of operations $\Lambda$ which can legally be implemented, and the corresponding states that provide a resource for achieving said task. One is particularly interested in the states $\Sigma_\Lambda$ which cannot achieve the desired result, and quantifying how useful other states are. This is naturally measured by the distance from the set $\Sigma_\Lambda$, 
$$
I_\Lambda=\min_{\sigma\in\Sigma_{\Lambda}}S(\rho\|\sigma)
$$
which uses the relative entropy $S(\rho\|\sigma)=\Tr(\rho\log_2\rho-\rho\log_2\sigma)$. Common examples include the restriction to Local Operations and Classical Communication, in which $\Sigma_{\text{LOCC}}$ are just the separable states, and the useful resource is the entanglement of the state. Similarly, reference frames provide a resource for overcoming the restrictions imposed by super-selection rules \cite{frames}.

One of the more intriguing, counter-intuitive, protocols to arise in recent years hints at a new classification of resources. Two parties who share a separable state can distribute entanglement between them by transmitting another separable state \cite{toby}! This protocol starts from a state $\rho_{ABC}$ which is initially partitioned between two parties, Alice and Bob, as $\rho_{AC|B}$. Alice then sends qubit $C$ to Bob. During transmission, it is required that $C$ is separable from everything else i.e.\ the bipartitioning $\rho_{C|AB}$ is separable. By the end, Alice and Bob hold $\rho_{A|BC}$, which we wish to be entangled. The correlations of $\rho_{ABC}$ constitute a resource for entanglement distribution by separable states (EDSS), and the protocol potentially provides practical benefits -- the correlations that constitute the resource for EDSS may be less susceptible to noise than the extremely fragile entanglement.

At this level, EDSS is a direct consequence of the existence of multipartite bound entanglement \cite{dur} i.e.\ a state $\rho_{ABC}$ which is separable under the bipartitions $C|AB$ and $B|AC$ may be entangled under the partition $A|BC$. In a one parameter system (e.g. temperature of a thermal state \cite{Kay:2010,Kay:2010b}), the existence of multipartite bound entanglement is not surprising. Indeed, it would be quite remarkable if, for every state, every possible bipartition were to become separable at the same parameter value.

Unlike recent work \cite{bound1,bound2}, studying the general question of bounding the entanglement change arising from a state $\rho_{ABC}$ which may initially be entangled, in the present paper, we focus more specifically on what it is in the correlations between Alice and Bob that permit EDSS. To this end, we reduce the protocol to its bare essentials.

\begin{protocol}EDSS:
\begin{enumerate}
\item Alice and Bob start with a separable state of two qubits, $\rho_{AB}$.
\item Alice introduces an ancilla, $C$, which is completely uncorrelated from $\rho_{AB}$. Without loss of generality, we take this to be $\rho_C=\half(\identity+s X)$ ($X$ is the Pauli-X matrix).
\item Alice performs a unitary $U_{AC}$, producing $\rho_{ABC}=U_{AC}\rho_{AB}\otimes\rho_C U_{AC}^\dagger$, but has selected $s$ to ensure that the bipartition $C|AB$ remains separable.
\item Alice sends the separable qubit $C$ to Bob.
\end{enumerate}
\end{protocol}
All of these steps are performed without Alice communicating anything to Bob. This ensures that the correlations in $\rho_{AB}$, which are going to contribute towards our ability to distribute entanglement, are not unduly sullied (LOCC operations can increase correlations). Nevertheless, once this stage is complete, Alice and Bob are permitted to communicate in order to distil entanglement from $\rho_{ABC}$. In comparison to the original protocol of \cite{toby}, we have prevented the initial resource state from having some correlations with qubit $C$, meaning all the relevant information is contained within $\rho_{AB}$.

Under the restriction of $\rho_{AB}$ being Bell-diagonal, we describe the set of states $\Sigma_{\text{EDSS}}$ that cannot be used for EDSS. The ability to distribute entanglement for all other Bell-diagonal states is proven constructively, demonstrating the ubiquity of EDSS resources.

During this protocol, we only allow one qubit to be transmitted from Alice to Bob. Allowing further communication obviates the need for a resource ($\Sigma_{\text{EDSS}}$ is an empty set); with a two qubit protocol, one can always first distribute the optimal separable state for the one qubit protocol, and then perform the entanglement distribution with that, completely ignoring $\rho_{AB}$.

It must be emphasised that Protocol 1 makes this paper, in the main, incomparable to \cite{bound1,bound2} because \cite{bound1,bound2} effectively start with a state $\rho_{ABC}$ and ask what the correlations between $C|AB$ convey about entanglement distribution. However, at the start of the protocol, when Alice holds qubits $A$ and $C$, why should any such correlations constitute a relevant resource? Alice can change those correlations (within limits), optimising the protocol. This optimisation is incorporated in our discussion.


{\em Bell-diagonal and Graph States:} Throughout this paper, we restrict $\rho_{AB}$ to being Bell-diagonal, meaning that it can be written in the form
$$
\rho_{AB}=\smallfrac{4}\left(\identity+s_{01}XX+s_{10}ZZ+s_{11}YY\right)
$$
where $X$, $Y$ and $Z$ are the Pauli matrices.
The parameters $s_x$, $x\in\{0,1\}^2$ ($s_{00}=1$) or, alternatively, the ordered eigenvalues $\lambda_1\geq \lambda_2\geq\lambda_3\geq \lambda_4$ ($\sum_i\lambda_i=1$) encapsulate everything about the state. 
In studies of entanglement, the restriction to Bell-diagonal states is natural because all states can be made Bell-diagonal via $\Lambda_{\text{LOCC}}$. Having excluded classical communication, this is no longer true. Nevertheless, their structure will prove immensely useful. Such a state has
\begin{eqnarray*}
I_{\text{LOCC}}&=&1-H(\max(\half,\lambda_1)) \\
I_{\text{class}}&=&1+\sum_i\lambda_i\log_2\lambda_i+H(\lambda_1+\lambda_2)
\end{eqnarray*}
where $I_{\text{LOCC}}$ is the entanglement, $I_{\text{class}}$ is commonly referred to as the discord \cite{REQ,discord1,discord2}, and $H(x)$ is the binary entropy. Our initial state will always be separable, meaning $\lambda_1\leq\half$. Alice and Bob can perform some deterministic local operations in order to convert $\rho_{AB}$ into a canonical form with $|s_{01}|\geq |s_{10}|\geq |s_{11}|$. Also, if 
\begin{equation}
\prod_{x}s_x>0,	\label{eqn:signs}
\end{equation}
we can ensure that all $s_x$ are positive. Otherwise, all but one of the weights can be made positive.

\begin{theorem}
If $s_{11}=0$, EDSS is impossible, i.e.\ no distillable entanglement can be produced.
\label{th:1}
\end{theorem}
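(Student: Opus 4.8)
The claim is equivalent to showing that, for every admissible choice of ancilla parameter $s$ and every unitary $U_{AC}$, the final state $\rho_{A|BC}$ has positive partial transpose (PPT): a PPT state has zero distillable entanglement, which is exactly ``EDSS is impossible''. So my plan is to prove $\rho_{ABC}^{T_A}\geq 0$.

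First I would record the one algebraic fact that makes $s_{11}=0$ special. Partial transposition of a single qubit sends $X\mapsto X$, $Z\mapsto Z$, $Y\mapsto -Y$, so $\rho_{AB}^{T_B}=\frac14(\identity+s_{01}XX+s_{10}ZZ-s_{11}YY)$; hence $s_{11}=0$ is precisely the condition $\rho_{AB}^{T_B}=\rho_{AB}$. Next I would propagate this invariance through Protocol~1. Steps 2--4 never act on $B$ — the ancilla $\rho_C$ is uncorrelated and $U_{AC}$ is supported on $A,C$ — so transposition on $B$ commutes with the whole construction and $\rho_{ABC}^{T_B}=U_{AC}(\rho_{AB}^{T_B}\otimes\rho_C)U_{AC}^\dagger=\rho_{ABC}$.

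The final step exchanges the partial transpose on $A$ for the one on $C$. Since a global transpose preserves the spectrum, and transposes on disjoint subsystems commute and compose to the global transpose ($T_{ABC}=T_A\circ T_B\circ T_C$), we have $\rho_{ABC}^{T_A}\geq 0 \iff \rho_{ABC}^{T_BT_C}\geq 0$, and the $B$-invariance just established collapses this to $\rho_{ABC}^{T_C}\geq 0$. But the protocol \emph{requires} $\rho_{ABC}$ to be separable across the cut $C|AB$, hence certainly PPT there, i.e.\ $\rho_{ABC}^{T_C}\geq 0$. Therefore $\rho_{A|BC}$ is PPT and nothing can be distilled between Alice and Bob.

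The only point needing care — and the reason a naive attempt fails — is that a generic entangling $U_{AC}$ destroys positivity of $\rho_{ABC}^{T_A}$, so one cannot simply track the $A|BC$ partial transpose through the protocol. The device that rescues the argument is to convert $T_A$ into $T_C$ using global transposition together with the $T_B$-invariance forced by $s_{11}=0$; this turns the would-be obstruction into exactly the separability constraint the protocol already imposes on $C|AB$. I expect that identifying this reformulation, rather than any subsequent computation, is the whole content of the proof.
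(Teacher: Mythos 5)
Your proof is correct and follows essentially the same route as the paper: the observation that $s_{11}=0$ makes the state invariant under transposition on $B$ (no $Y$ terms on $B$, which $U_{AC}$ cannot introduce), so that $\lambda(\rho_{ABC}^{T_A})=\lambda(\rho_{ABC}^{T_BT_C})=\lambda(\rho_{ABC}^{T_C})\geq 0$ by the imposed separability of $C|AB$, whence $A|BC$ is PPT and nothing is distillable. The paper phrases this as the chain $\lambda(\rho_{ABC}^{A})=\lambda(\rho_{ABC}^{AB})=\lambda(\rho_{ABC}^{C})$, which is exactly your conversion of $T_A$ into $T_C$ via the global transpose and $T_B$-invariance.
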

\begin{proof}
Let $\lambda(\rho)$ denote the spectrum of $\rho$, and $\rho^A$ be the partial transpose of $\rho$ on qubit $A$. In order for $C|AB$ to be separable, it must be that $\lambda(\rho_{ABC}^C)\geq 0$. In order for $A|BC$ to contain distillable entanglement, $\lambda(\rho_{ABC}^A)<0$ \cite{horo2}. Since there are no $Y$ terms present on $B$ (which cannot be changed by $U_{AC}$), $\lambda(\rho_{ABC}^A)=\lambda(\rho_{ABC}^{AB})=\lambda(\rho_{ABC}^C)$. Hence, imposing that $C|AB$ is separable ensures that $A|BC$ is non-distillable \footnote{This aspect generalises to arbitrary $\rho_{AB}$ -- the correlation matrix $T_{ij}=\Tr(\sigma_i\otimes\sigma_j\rho_{AB})$ for $i,j=1\ldots 3$ can be diagonalised by local operations. Assuming $T$ is not full rank, $T_{22}=0$, and Bob's marginals ${\vec n}\cdot{\vec \sigma}$ satisfy $n_2=0$, the same argument proves distillable EDSS is not possible.}.
\end{proof}

Having concluded that $s_{11}\neq 0$ is a necessary condition for EDSS, we want to investigate $s_{11}\neq 0$ (equivalently, $\lambda_1+\lambda_4\neq 0$), for which it will be sufficient to restrict $U_{AC}$ to being a controlled-phase gate. This enables the use of the graph state formalism, describing $\rho_{ABC}$ as (up to a Hadamard gate) being diagonal in the graph state basis of the linear 3 vertex graph.

In general, a graph $G$ is composed of a set of vertices $V$ and edges $E$. With each of the $N$ vertices, $i$, associate a qubit and define the stabilizer operators
$$
K_i=X_i\prod_{(i,j)\in E}Z_j,\qquad[K_i,K_j]=0.
$$
We will also use the notation $K_x$ for $x\in\{0,1\}^N$ to mean the product of all the $K_i$ for which the bits $x_i=1$ (and similarly for the Pauli operators).

The state $\ket{\psi^G}$ is the $+1$ eigenstate of each of the stabilizers, $K_i\ket{\psi^G}=\ket{\psi^G}$, and can be formally constructed by taking each qubit to be in the $\ket{+}=(\ket{0}+\ket{1})/\sqrt{2}$ state, and applying controlled-phase gates between every pair of qubits $(i,j)\in E$. The excited states
$$
\ket{\psi_x^G}=Z_x\ket{\psi}\qquad x\in\{0,1\}^N
$$
have eigenvalues $(-1)^{x_i}$ with each stabilizer $K_i$. Any graph-diagonal state can be written as
$$
\rho_G=\frac{1}{2^N}\sum_{x\in\{0,1\}^N}s_xK_x
$$
where $s_{00\ldots 0}=1$ and, in order to satisfy positivity,
$$
\sum_{x\in\{0,1\}^N}s_x(-1)^{x\cdot y}\geq 0\qquad\forall y\in\{0,1\}^N.
$$

As already indicated, the main graph that we are interested in is $G_3$, the chain of 3 vertices. We label the vertices of this chain as $1\equiv C$, $2\equiv A$, $3\equiv B$. For any vertex $R$, denote by $G_R$ the reduced subgraph of $G_3$ with vertex $R$ removed. So, $G_C$ is a two vertex chain, but $G_A$ is just two vertices, with no edges.

The partial transpose condition is particularly useful for detecting entanglement in bipartitions of $G$, and is implemented by manipulating the signs of the coefficients $s_x$, thereby leaving the eigenvectors unchanged \cite{Kay:2010,Kay:2010b}. Hence, with respect to a bipartition $z\in\{0,1\}^N$ (bit $z_i$ specifies the partition of vertex $i$), the state is non-positive under the partial transpose (NPT) if
$$
\min_{y\in\{0,1\}^N}\!\sum_{x\in\{0,1\}^N}\!\!\!s_x(-1)^{x\cdot y}(-1)^{\sum_{(i,j)\in E}x_ix_j(z_i\oplus z_j)}\leq0,
$$
otherwise it is positive under the partial transpose (PPT) with respect to the bipartition $z$. The following two lemmas prove that, for the class of states we're interested in, the partial transpose condition conveys everything we need to know; they add sufficiency to the necessity conditions that always exist between, first, NPT and distillability and, secondly, PPT and separability.

\begin{lemma} If a state $\rho_{G_3}$ has a bipartition which is NPT, the state is distillable with respect to that bipartition.
\label{lem:1}
\end{lemma}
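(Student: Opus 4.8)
The plan is to exploit the fact that, for graph-diagonal states, a negative partial transpose is witnessed by an \emph{eigenvector} which happens to have very low Schmidt rank across the bipartition in question, so that a distillation protocol follows almost immediately. Recall from the discussion above that partial-transposing $\rho_{G_3}$ with respect to any bipartition only flips the signs of some of the weights $s_x$, leaving the eigenvectors --- the graph-state basis $\{\ket{\psi_y^{G_3}}\}$ --- unchanged. Hence, if a bipartition that isolates a single vertex $R$ is NPT, there is some $y$ for which the eigenvector $\ket{\psi_y^{G_3}}$ of $\rho_{G_3}^R$ (the partial transpose on qubit $R$) has negative eigenvalue, i.e.\ $\bra{\psi_y^{G_3}}\rho_{G_3}^R\ket{\psi_y^{G_3}}<0$.

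Next I would record that every graph state of the three-vertex chain has Schmidt rank exactly $2$ across any bipartition $R\,|\,(\text{the other two})$. This is a one-line check: the states $\ket{\psi_y^{G_3}}$ differ from $\ket{\psi^{G_3}}$ only by local $Z$ operators, so share its Schmidt ranks; for a leaf cut the controlled-phase gate internal to the two-qubit side cannot raise the rank above that of a single entangled pair, while for the central cut $\ket{\psi^{G_3}}=\frac{1}{\sqrt{2}}\left(\ket{0}_A\ket{++}_{BC}+\ket{1}_A\ket{--}_{BC}\right)$, which is manifestly rank two.

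It remains to package these facts as a distillation argument. Let $P$ be the rank-two projector onto the span of the Schmidt vectors of $\ket{\psi_y^{G_3}}$ on the two-qubit side (the single-qubit party does nothing); applying $P$ is a local measurement, and its output is a two-qubit state $P\rho_{G_3}P$. Because $P$ acts trivially on the qubit $R$, one has $(P\rho_{G_3}P)^{R}=P\rho_{G_3}^R P$, so $\bra{\psi_y^{G_3}}(P\rho_{G_3}P)^{R}\ket{\psi_y^{G_3}}=\bra{\psi_y^{G_3}}\rho_{G_3}^R\ket{\psi_y^{G_3}}<0$: the two-qubit state is NPT, therefore entangled, therefore distillable \cite{horo2}. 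Since local operations cannot create distillable entanglement, $\rho_{G_3}$ is itself distillable across that cut, and the argument is identical for all three bipartitions. I expect the only points needing a careful word to be the non-vanishing of the success probability of the measurement (which follows from the strict negativity above) and the commutation identity for $P$; the Schmidt-rank bookkeeping for $G_3$ is immediate.
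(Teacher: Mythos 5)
Your proposal is correct and follows essentially the same route as the paper: identify the negative eigenvector of the partial transpose as a graph-basis state, use its Schmidt-rank-2 structure across the one-versus-two cut to project the two-qubit side onto a two-dimensional subspace, and invoke the sufficiency of NPT for distillability in the resulting effective two-qubit state. The only cosmetic difference is that the paper writes the projector explicitly via the graph-state decomposition $\frac{1}{\sqrt{2}}\left(\ket{0}_R\ket{\psi^{G_R}}+\ket{1}Z_E\ket{\psi^{G_R}}\right)$ rather than appealing abstractly to the Schmidt vectors (and the two-qubit NPT-implies-distillable step is the content of \cite{horo} rather than \cite{horo2}).
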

\begin{proof}
For a non-trivial bipartition of $G_3$, there must be one qubit ($R$) on one side of the partition, and two qubits on the other. Taking the partial transpose over $R$ and assuming the eigenvector with negative eigenvalue, $\lambda_R$, to be $\ket{\psi^{G_3}_x}$, we have that
$$
\ket{\psi^{G_3}_x}=Z_x\frac{1}{\sqrt{2}}\left(\ket{0}_R\ket{\psi^{G_R}}+\ket{1}Z_E\ket{\psi^{G_R}}\right)
$$
where $Z_E$ is a product of $Z$s on the neighbours of $R$. Given that the partial transpose operation didn't act on the two-qubit bipartition, this means that the states that give the negative eigenvalue exist in the subspace $\{\ket{\psi_{x\backslash x_R}^{G_R}}, Z_E\ket{\psi_{x\backslash x_R}^{G_R}}\}$. By applying the projection
$$
P=\left(\ket{0}\bra{\psi_{x\backslash x_R}^{G_R}}+\ket{1}\bra{\psi_{x\backslash x_R}^{G_R}}Z_E\right),
$$
we produce a two-qubit state $\rho_{\text{SUC}}$ with a negative eigenvalue $\lambda_R/p$ under partial transposition, where $p$ is the success probability of the projection. Once localised to a pair of qubits, NPT is sufficient for the distillation of entanglement \cite{horo}.
\end{proof}
If the state at the end of our protocol is NPT with respect to $A|BC$, Alice and Bob can use multiple copies to extract that entanglement by first projecting into a two qubit state (which is clearly optimal), and then distilling the successfully projected copies \footnote{In general, project onto the Schmidt basis of the negative eigenvector of the partial transpose \cite{bound2}.}.

\begin{lemma}
If a state $\rho_{G_3}$ has a bipartition which is PPT, the state is separable with respect to that bipartition.
\label{lem:2}
\end{lemma}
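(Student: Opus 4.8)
The plan is to show that, across \emph{any} non-trivial bipartition $R|\overline R$ of $G_3$, a graph-diagonal $\rho_{G_3}$ is a direct sum of two effective two-qubit states, reducing the claim to the well-known fact that PPT implies separability for two qubits. I would begin from the decomposition already used in the proof of Lemma~\ref{lem:1},
$$
\ket{\psi^{G_3}_x}=\tfrac{1}{\sqrt 2}\bigl(\ket0_R\ket{\psi^{G_R}_{x\backslash x_R}}+(-1)^{x_R}\ket1_RZ_E\ket{\psi^{G_R}_{x\backslash x_R}}\bigr),
$$
where $Z_E=\prod_{j\in N(R)}Z_j$ is the product of $Z$s on the neighbours of $R$ in $G_3$. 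On the graph basis of $G_R$ this operator merely relabels excitations, $Z_E\ket{\psi^{G_R}_y}=\ket{\psi^{G_R}_{y\oplus\nu}}$ with $\nu$ the indicator vector of $N(R)$. Since every vertex of $G_3$ has a neighbour, $\nu\neq0$, so $y\mapsto y\oplus\nu$ is a fixed-point-free involution on the four labels $y\in\{0,1\}^2$ and partitions them into two orbits of size two; correspondingly $\mathcal H_{\overline R}=\mathcal H_0\oplus\mathcal H_1$, with each $\mathcal H_j$ the two-dimensional span of an orbit. Every $\ket{\psi^{G_3}_x}$ lies in $\mathbb{C}^2_R\otimes\mathcal H_j$ for the single orbit $j$ containing $x\backslash x_R$, and hence, $\rho_{G_3}$ being diagonal in this basis, $\rho_{G_3}=\rho_0\oplus\rho_1$ with each $\rho_j$ a genuine two-qubit state on $\mathbb{C}^2_R\otimes\mathcal H_j$.

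Next, the partial transpose $\rho_{G_3}^R$ acts trivially on the $\overline R$ factor, so in a basis of $\mathcal H_{\overline R}$ adapted to the splitting $\mathcal H_0\oplus\mathcal H_1$ it is still block diagonal, $\rho_{G_3}^R=\rho_0^R\oplus\rho_1^R$. Thus $\rho_{G_3}$ is PPT with respect to $R|\overline R$ exactly when each two-qubit block $\rho_j$ is PPT, and then by the two-qubit separability criterion \cite{horo} each block is separable, $\rho_j=\sum_k p_{jk}\,\proj{a_{jk}}\otimes\proj{b_{jk}}$ with $\ket{a_{jk}}\in\mathbb{C}^2_R$ and $\ket{b_{jk}}\in\mathcal H_j$. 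Because $\mathcal H_j\subseteq\mathcal H_{\overline R}$, each $\ket{b_{jk}}$ is a legitimate pure state of the two qubits forming $\overline R$, so $\rho_{G_3}=\sum_{j,k}p_{jk}\,\proj{a_{jk}}\otimes\proj{b_{jk}}$ exhibits $\rho_{G_3}$ as a convex mixture of states product across $R|\overline R$; that is, it is separable with respect to that bipartition. (For the cut of primary interest, $A|BC$, the two blocks are the $\pm1$ eigenspaces of $X_BX_C$, which are even distinguishable by a measurement confined to $BC$.)

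The one step demanding genuine care is this direct-sum decomposition, in particular checking that $Z_E$ has no fixed point on the relevant labels for each $R\in\{A,B,C\}$: this is precisely what forces the otherwise generic $2\times4$ bipartition to collapse into two $2\times2$ problems, for which PPT is already known to imply separability, and everything downstream is routine. It is worth noting that the same argument applies verbatim to any single-qubit-versus-rest cut of any graph-diagonal state, since $\nu$ is nonzero whenever the isolated vertex has an edge.
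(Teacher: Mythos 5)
Your proof is correct, but it takes a genuinely different route from the paper. The paper proves separability \emph{constructively}: it expands $8\rho_{ABC}$ in the stabilizer basis, groups the terms according to their Pauli operator on the lone vertex $R$ (the $\identity_R$, $X_R$, $Y_R$ and $Z_R$ sectors, plus a leftover term absorbed with an absolute value), and observes that each group is a commuting set whose common eigenvectors are product across the cut, so that each line is an explicitly separable positive operator once the PPT inequalities supply enough identity. Your argument instead exploits the structure of the graph \emph{basis}: since $Z_E$ acts on the graph basis of $G_R$ as the fixed-point-free involution $y\mapsto y\oplus\nu$, the two-qubit space $\mathcal H_{\overline R}$ splits into two invariant planes, every graph-diagonal state becomes a direct sum of two genuine two-qubit states $\rho_0\oplus\rho_1$ compatible with the cut, the partial transpose respects the blocks, and the Horodecki $2\times2$ theorem finishes the job. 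What each buys: the paper's decomposition is tied to the explicit protocol coefficients ($s$, $s_{01}$, $s_{10}$, $s_{11}$) and is reused to read off the eigenvalue formulas (\ref{eqn:C}) and (\ref{eqn:A}) in Theorem~\ref{th:2}, whereas your reduction requires no case-by-case positivity checking, manifestly covers \emph{all} graph-diagonal states of $G_3$ (the lemma as stated, not just the protocol family), and, as you note, generalises verbatim to any single-vertex-versus-rest cut of any graph-diagonal state with $2^{N-2}$ two-qubit blocks. The one step you flag as delicate --- that $\nu\neq 0$ for every $R\in\{A,B,C\}$ because $G_3$ is connected --- is indeed the crux, and it holds.
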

\begin{proof}
Utilising the techniques in \cite{Kay:2010,Kay:2010b}, the state can be expanded in terms of the stabilizers, which are grouped into sets that have simultaneous eigenvectors which are separable with respect to the bipartition. This consists of grouping the terms according to their Pauli operator on $R$. For $R=C$, $8\rho_{ABC}$ can be expressed as
\begin{eqnarray*}
&8\lambda_{C|AB}\identity +&\\
&s_{11}Z_AX_B+Z_C(s_{10}X_AZ_B+s_{11}Y_AY_B)+s_{10}\identity +&\\
&ss_{01}Z_AX_B+sX_C(Z_A+s_{01}X_B)+s\identity +&\\
&ss_{11}Z_AX_B+sY_C(s_{10}Y_AZ_B-s_{11}X_AY_B)+ss_{10}\identity +&\\
&(s_{01}\!-\!s_{11}\!-\!ss_{01}\!-\!ss_{11})Z_AX_B\!+\!|s_{01}\!-\!s_{11}\!-\!ss_{01}\!-\!ss_{11}|\identity&
\end{eqnarray*}
where $\lambda_{C|AB}$ is the minimum eigenvalue of $\rho_{ABC}^C$. Every line represents a separable state, provided that bipartition is PPT. Similarly, for $R=A$,
\begin{eqnarray*}
&8\lambda_{A|BC}\identity +&\\
&ss_{01}X_BX_C+Z_A(sX_C\!+\!s_{01}X_B)+\identity(s\!+\!s_{01}\!-\!ss_{01}) +&\\
&X_A(s_{10}Z_BZ_C-ss_{11}Y_BY_C)+\identity(s_{10}+ss_{11}) +&\\
&Y_A(ss_{10}Z_BY_C+s_{11}Y_BZ_C)+\identity(ss_{10}+s_{11}).&
\end{eqnarray*}
PPT guarantees separability.
\end{proof}

\begin{theorem}
For a Bell-diagonal state $\rho_{AB}$ with $(\lambda_1+\lambda_4)>\half$ (i.e. all $s_x>0$), distillable entanglement can always be produced via EDSS.
\label{th:2}
\end{theorem}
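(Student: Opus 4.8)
\emph{Proof strategy.} My plan is to lean on Lemmas~\ref{lem:1} and~\ref{lem:2}: restricting $U_{AC}$ to a controlled-phase keeps $\rho_{ABC}$ graph-diagonal on $G_3$, so it suffices to produce an ancilla parameter $s$ (with $|s|\le1$, so that $\rho_{ABC}\ge0$) for which the cut $C|AB$ is PPT while $A|BC$ is NPT. After the canonical reduction I may assume $s_{01}\ge s_{10}\ge s_{11}>0$, and separability of $\rho_{AB}$ gives both $\Sigma:=s_{01}+s_{10}+s_{11}\le1$ and $s_{01}<1$. The first concrete step is to record the two minimum partial-transpose eigenvalues as functions of $s$; these come from the NPT criterion, equivalently by matching the coefficient of $\identity$ in the decompositions in the proof of Lemma~\ref{lem:2}:
\begin{eqnarray*}
8\lambda_{A|BC}(s)&=&1-\Sigma-sD,\qquad D:=1+s_{10}+s_{11}-s_{01},\\
8\lambda_{C|AB}(s)&=&1-s_{10}-s(1+s_{10})-\big|s_{01}(1\!-\!s)-s_{11}(1\!+\!s)\big|.
\end{eqnarray*}

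Since $s_{01}<1+s_{10}+s_{11}$ we have $D>0$, so $\lambda_{A|BC}(s)<0$ exactly for $s>s_{\text{crit}}:=(1-\Sigma)/D$, and $\Sigma\le1$, $s_{10}+s_{11}>0$ place $s_{\text{crit}}$ in $[0,1)$. Everything then reduces to showing that $C|AB$ is still PPT just above $s_{\text{crit}}$; as $\lambda_{C|AB}(\cdot)$ is continuous (piecewise-linear in $s$), it is enough to check $\lambda_{C|AB}(s_{\text{crit}})>0$. Substituting, using $1-s_{\text{crit}}=2(s_{10}+s_{11})/D$ and $1+s_{\text{crit}}=2(1-s_{01})/D$, I expect the expression to collapse to
$$
8\lambda_{C|AB}(s_{\text{crit}})=\frac{2}{D}\Big[(s_{11}+s_{01}s_{10})-\big|\,s_{01}s_{10}+2s_{01}s_{11}-s_{11}\,\big|\Big],
$$
which equals $4s_{11}(1-s_{01})/D$ when $s_{01}s_{10}+2s_{01}s_{11}\ge s_{11}$ and $4s_{01}(s_{10}+s_{11})/D$ otherwise -- strictly positive in either branch because $0<s_{11}\le s_{10}\le s_{01}<1$. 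Taking $s$ marginally above $s_{\text{crit}}$ (still below $1$) then makes $\rho_{ABC}$ PPT across $C|AB$, hence separable there by Lemma~\ref{lem:2} so the protocol is legitimate, and NPT across $A|BC$, hence distillable by Lemma~\ref{lem:1}; one could in fact raise $s$ to the edge of the $C|AB$-PPT region to maximise the entanglement delivered.

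The routine parts are: confirming the two linearised eigenvalue formulas against the explicit decompositions of Lemma~\ref{lem:2}; checking $D>0$, $s_{\text{crit}}\in[0,1)$ and $s_{01}<1$ from $s_{11}>0$ and $\Sigma\le1$; and disposing of the boundary case $\Sigma=1$ (where $s_{\text{crit}}=0$ and any small $s>0$ works). The one genuinely structural point -- the expected obstacle -- is the worry that the window in which $C|AB$ remains PPT might close before $A|BC$ turns NPT; the substitution above is precisely what rules this out, and it makes transparent that the hypothesis $\lambda_1+\lambda_4>\half$ (equivalently, all $s_x>0$) is used exactly to keep both branches of $8\lambda_{C|AB}(s_{\text{crit}})$ strictly positive, in line with Theorem~\ref{th:1} where $s_{11}=0$ makes that quantity vanish.
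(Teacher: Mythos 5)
Your proof is correct and follows essentially the same route as the paper's: a controlled-phase $U_{AC}$, the two partial-transpose eigenvalue formulas (Eqns.~(\ref{eqn:C}) and (\ref{eqn:A})), and Lemmas~\ref{lem:1} and \ref{lem:2} to upgrade NPT/PPT to distillability/separability, and your algebra (including both branches of the absolute value) checks out. The only difference is the direction of entry into the working window: the paper fixes $s=\min(\lambda_4/\lambda_3,\lambda_2/\lambda_1)$ so that $\lambda_{C|AB}=0$ and verifies $\lambda_{A|BC}<0$ there (which immediately yields the explicit, quantitatively optimal protocol used later in the paper), whereas you sit at the threshold where $\lambda_{A|BC}=0$ and verify $\lambda_{C|AB}>0$, then nudge $s$ upward --- as you note, pushing $s$ to the edge of the $C|AB$-PPT region recovers the paper's choice.
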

\begin{proof}
Given that, for $U_{AC}$ being a controlled-phase gate,
\begin{eqnarray}
8\lambda_{C|AB}\!\!&=&\!\!1\!-\!s_{10}\!-\!s(1\!+\!s_{10})-|s_{01}\!-\!s_{11}\!-\!ss_{01}\!-\!ss_{11}| \label{eqn:C}\\
8\lambda_{A|BC}\!\!&=&\!\!1\!-\!s_{01}\!-\!s_{10}\!-\!s_{11}\!-\!s(1\!-\!s_{01}\!+\!s_{10}\!+\!s_{11}), \label{eqn:A}
\end{eqnarray}
if Alice prepares $\rho_C$ with $s=\min(\lambda_4/\lambda_3,\lambda_2/\lambda_1)$, then $\lambda_{C|AB}=0$. Lemma \ref{lem:2} ensures that $C$ remains separable from $AB$. 
Since $\lambda_1+\lambda_4>\half$, one can show that
$$
\lambda_{A|BC}=\frac{1}{4}\left(1-2\lambda_1-s(1-2\lambda_2)\right)<0,
$$
meaning that the entanglement distribution protocol is successful via Lemma \ref{lem:1}, which specifies that Bob localises the entanglement by performing a projection
\begin{equation}
P=\ket{0}\bra{++}+\ket{1}\bra{--}	\label{eqn:proj}
\end{equation}
on his two qubits. With probability
$$
p=\half\left(1+s(2\lambda_1+2\lambda_2-1)\right)>\half.
$$
a two-qubit state with negative eigenvalue $\lambda_{A|BC}/p$ under partial transpose is produced.
\end{proof}
\begin{lemma}
For $\rho_{AB}$ with $\lambda_1+\lambda_4<\half$, distillable entanglement can always be produced via EDSS.
\end{lemma}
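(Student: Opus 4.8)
The plan is to rerun the construction of Theorem~\ref{th:2} -- a controlled-phase $U_{AC}$, the ancilla $\rho_C=\half(\identity+sX)$, and Lemmas~\ref{lem:1} and \ref{lem:2} -- but with a different assignment of the correlations of $\rho_{AB}$ to the Pauli slots. Note first that $\lambda_1+\lambda_4<\half$ is exactly the case left over by Theorems~\ref{th:1} and~\ref{th:2}: all three correlation magnitudes are nonzero but $\prod_{x}s_x<0$, so after canonicalisation one of $s_{01},s_{10},s_{11}$ is negative. Since $\lambda_1+\lambda_3\geq\half$ always holds (because $\lambda_2+\lambda_4\leq\half$), I would put $\rho_{AB}$ into the form in which the unique negative correlation occupies the $ZZ$ slot -- i.e.\ swap the $ZZ$ and $YY$ correlations relative to the canonical ordering of Theorem~\ref{th:2} -- so that $s_{01}=2(\lambda_1+\lambda_2)-1\geq0$, $s_{11}=2(\lambda_1+\lambda_3)-1\geq0$ and $s_{10}=2(\lambda_1+\lambda_4)-1<0$.

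For this assignment the modulus in \eqref{eqn:C} genuinely changes branch inside $[0,1]$, and substituting the new values of $s_{01},s_{10},s_{11}$ into \eqref{eqn:C} and \eqref{eqn:A} and re-expressing in eigenvalues gives $2\lambda_{C|AB}=\min(\lambda_3-s\lambda_4,\ \lambda_2-s\lambda_1)$ together with $4\lambda_{A|BC}=(1-2\lambda_1)-s(1-2\lambda_2)$; the second is identical to Theorem~\ref{th:2} while the first differs from it only by $\lambda_3\leftrightarrow\lambda_4$. Because $\lambda_3\geq\lambda_4$, the constraint $\lambda_3-s\lambda_4\geq0$ is automatic for $s\leq1$, so $C|AB$ stays PPT -- hence separable, by Lemma~\ref{lem:2} -- for all $s\leq\lambda_2/\lambda_1$. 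I would therefore take $s=\lambda_2/\lambda_1$: this saturates the constraint, $\lambda_{C|AB}=0$ (using $\lambda_1\lambda_3\geq\lambda_2\lambda_4$), and leaves $4\lambda_1\lambda_{A|BC}=\lambda_1(1-2\lambda_1)-\lambda_2(1-2\lambda_2)$, after which Lemma~\ref{lem:1} gives distillability, with Bob localising via a projection analogous to \eqref{eqn:proj}.

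The crux is then the inequality $\lambda_1(1-2\lambda_1)<\lambda_2(1-2\lambda_2)$, i.e.\ that the concave map $f(x)=x(1-2x)$, which is maximal at $x=1/4$, takes a smaller value at $\lambda_1$ than at $\lambda_2$. Using $\lambda_1\in[1/4,1/2]$ and $\lambda_2\geq(1-\lambda_1)/3$: if $\lambda_2\geq1/4$ this is immediate from the monotonicity of $f$ on $[1/4,1/2]$, and if $\lambda_2<1/4$ it suffices to check $f(\lambda_1)<f((1-\lambda_1)/3)$, which is precisely $(1-4\lambda_1)^2/9>0$; in either case the inequality is strict whenever $\lambda_1>\lambda_2$. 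The one genuinely delicate point is the degenerate locus $\lambda_1=\lambda_2$ -- where two of the correlation magnitudes coincide and the construction above only yields $\lambda_{A|BC}=0$: there one must either work with a controlled-phase gate conjugated by a Hadamard on $C$ (equivalently, the ancilla prepared along another axis) and redo the stabiliser decomposition, or dispose of this measure-zero set by a limiting argument. Lastly I would record $\lambda_{A|BC}$ and the success probability of the localising projection in closed form to extract the advertised upper and lower bounds on the distributed entanglement, and verify that both vanish continuously as $\lambda_1+\lambda_4\to\half$, in agreement with Theorem~\ref{th:1}.
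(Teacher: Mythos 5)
There is a genuine gap, and it sits exactly where your argument does all its work: the claim that, after moving the negative correlation into the $ZZ$ slot, the $C|AB$ cut stays PPT for all $s\leq\lambda_2/\lambda_1$. Eqn.~(\ref{eqn:C}) is not a universally valid expression for the minimum eigenvalue of $\rho_{ABC}^C$; it is only the branch of the minimisation over the eight partial-transpose eigenvalues that happens to be minimal when all $s_x\geq 0$. If you carry out the full minimisation for a controlled-phase $U_{AC}$, the spectrum of $\rho_{ABC}^C$ is the set $\left\{\half(\lambda_1\pm s\lambda_2),\half(\lambda_2\pm s\lambda_1),\half(\lambda_3\pm s\lambda_4),\half(\lambda_4\pm s\lambda_3)\right\}$, and this set is \emph{unchanged} by your reassignment: the pairing of eigenvalues in these expressions is fixed by which pair-sum occupies the $XX$ slot ($s_{01}=2(\lambda_1+\lambda_2)-1$ in both your arrangement and the canonical one), not by where the negative sign sits. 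In particular the branch $\half(\lambda_4-s\lambda_3)$, which your substitution into (\ref{eqn:C}) drops, is still present, so separability of $C|AB$ still forces $s\leq\min(\lambda_2/\lambda_1,\lambda_4/\lambda_3)$ and your choice $s=\lambda_2/\lambda_1$ generically entangles the transmitted qubit. With the constraint restored, your protocol collapses back to that of Theorem~\ref{th:2}, which demonstrably fails in this regime: for $\lambda=(0.3,0.3,0.3,0.1)$ one has $s\leq\smallfrac{3}$ and $\lambda_{A|BC}=\smallfrac{4}(1-2\lambda_1)(1-s)>0$ (the other branch $\smallfrac{4}[(1-2\lambda_3)-s(1-2\lambda_4)]$ is also positive), so no entanglement is ever created across $A|BC$ by sending qubit $C$. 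Your concluding inequality $\lambda_1(1-2\lambda_1)\leq\lambda_2(1-2\lambda_2)$ is itself fine (it factorises as $(\lambda_1-\lambda_2)\left(1-2(\lambda_1+\lambda_2)\right)\leq0$), but it is evaluated at a value of $s$ that the separability requirement does not permit.

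The missing idea is to change which qubit is transmitted, not which Pauli slot carries the sign. Keeping the negative coefficient in the $s_{11}$ slot (or $s_{01}$ in the degenerate case where the modulus in (\ref{eqn:C}) vanishes) and writing $s_{11}=-|s_{11}|$, Eqns.~(\ref{eqn:C}) and (\ref{eqn:A}) exchange roles: the Theorem~\ref{th:2} value $s=\min(\lambda_4/\lambda_3,\lambda_2/\lambda_1)$ now sets $\lambda_{A|BC}=0$, with separability of $A|BC$ guaranteed by the $R=A$ decomposition in Lemma~\ref{lem:2}, while $\lambda_{C|AB}<0$ gives distillability across $C|AB$ via Lemma~\ref{lem:1}. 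Alice therefore sends qubit $A$ rather than qubit $C$, and the entanglement appears across the $C|AB$ cut. No branch analysis of $f(x)=x(1-2x)$ and no special treatment of the $\lambda_1=\lambda_2$ locus is then needed; the whole case is inherited from Theorem~\ref{th:2} by this role exchange.
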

\begin{proof}
In the regime $\lambda_1+\lambda_4<\half$, one of the coefficients $s_x$ is negative. If $|s_{01}-s_{11}-ss_{01}-ss_{11}|>0$, make $s_{11}$ the negative coefficient, otherwise make it $s_{01}$. By replacing the negative coefficient $s_x$ with $-|s_x|$, Eqns.\ (\ref{eqn:C}) and (\ref{eqn:A}) exchange roles. Hence the contents of Theorem \ref{th:2} are reproduced, i.e.\ the same value of $s$ makes $A|BC$ separable, leaving $C|AB$ entangled, so Alice just chooses to send qubit $A$ instead of qubit $C$ (although the basis of Bob's projection must change as a result).
\end{proof}
We conclude that $s_{11}\neq 0$ is necessary and sufficient for a Bell diagonal state to be useful for EDSS. The optimal state for the outlined protocol has $s_{01}=\half$ and $s_{10}=s_{11}=\smallfrac{4}$, yielding $\lambda_{A|BC}=-\smallfrac{16}$. Although the example in \cite{toby} had $\lambda_{A|BC}=-\smallfrac{6}$, that protocol had the advantage of correlations with qubit $C$ in the initial resource.



{\em Quantifying Resources:} Our results exactly characterise the set of Bell-diagonal states for which entanglement distribution is impossible, $\Sigma_{\text{EDSS}}$. We have also given an explicit protocol which serves to lower bound the amount of entanglement, $I_{\text{LOCC}}$, that can be generated for a given resource $\rho_{AB}$, 
\begin{equation}
I_{\text{LOCC}}\geq p(1-H((1+s)\lambda_1/(2p))). \label{eqn:lower}
\end{equation}
How good a bound is this? The na\"ive expectation that $I_{\text{EDSS}}=1-H(\lambda_1+\lambda_4)$ should upper bound the ability to transfer entanglement does not hold because it cannot be guaranteed that the closest state $\sigma_{AB}\in\Sigma_{\text{EDSS}}$ becomes separable across $C|AB$ at the same value of $s$ that $\rho_{AB}$ does. Nevertheless, the proximity of the two states predicts a good approximation, and numerically it is closely related to the lower bound.

To find a rigorous bound, we start with $\rho_{AB}$ and attempt to dephase qubit $B$ to form a $\sigma_{AB}\in\Sigma_{\text{EDSS}}$. This can only be achieved with $\sigma_{AB}=\half(\rho_{AB}+Z_B\rho_{AB}Z_B)$. The dephasing only removes entanglement, so PPT of $\rho_{C|AB}$ is preserved. Hence, $\sigma_{ABC}$ formed from $\sigma_{AB}$ is PPT about the $A|BC$ partition as well. The eigenvalues $\chi_i$ of $\sigma_{AB}$ satisfy $\chi_1=\chi_2$ (the property of Bell-diagonal states in $\Sigma_{\text{class}}$), meaning that $I_{\text{LOCC}}\leq I_{\text{class}}$. Although we relied on the PPT criterion, one can readily check that the states $\sigma_{ABC}$ in this class are separable. This bound is rather weak, as we can see from examining the subset of states drawn from $\Sigma_{EDSS}$. We know that these must all have $I_{\text{LOCC}}=0$. While for some of these states $I_{\text{class}}=0$, $\Sigma_{EDSS}$ also includes the separable states with the largest discord ($\lambda_1=\half$, $\lambda_2=\lambda_3=\smallfrac{4}$). The difference arises because the states $\rho_{AB}$ with 0 discord have two non-zero values of $s_x$, whereas we have proven that any $\rho_{AB}$ with a single $s_x$ equal to zero cannot be used for EDSS. However, it remains the only bound that we have succeeded in proving for arbitrary $U_{AC}$.

There is strong evidence that our lower bound is optimal, beyond its numerical closeness to $I_{\text{EDSS}}$. Extending our argument from Theorem \ref{th:1}, for every $U_{AC}$,
\begin{eqnarray}
\lambda(\rho^C_{ABC})&=&\lambda(\rho^A_{ABC}-\half s_{11}U_{AC}Y_AY_B\otimes \rho_CU_{AC}^\dagger)	\nonumber\\
\left|\lambda_{A|BC}-\lambda_{C|AB}\right|&\leq&\smallfrac{4}(1+s)|s_{11}|.	\label{eqn:diff}
\end{eqnarray}
In our explicit protocol, where $s=\min(\lambda_4/\lambda_3,\lambda_2/\lambda_1)$, then if $s=\lambda_4/\lambda_3$, this bound is saturated, meaning that $\lambda_{A|BC}$ is as small as possible given $\lambda_{C|AB}\geq 0$, and we already know that the entanglement localisation and distillation steps are optimal. The reason that this isn't a full optimality proof is that we have been unable to show that some other $U_{AC}$ cannot be less entangling on $\rho_{C|AB}$, allowing a larger $s$, while simultaneously being more entangling on $\rho_{A|BC}$. Nonetheless, the scope for improvement, $\smallfrac{4}(1-\lambda_4/\lambda_3)s_{11}$, is extremely limited.

{\em Noise Tolerance:} One of the useful features of the protocol is that, apart from specifying $s$, Alice and Bob act independently of the choice of the state. This means that even if the state is changed by noise, the protocol can still function. Does this impart the protocol with increased noise tolerance over the direct distribution of an entangled state? Imagine, for instance, that the qubits that Alice and Bob hold are well protected from noise, but the channel through which qubit $C$ is sent is noisy. Let us compare EDSS using a state $\rho_{AB}$ with $\lambda_1=\half$, and Alice directly transmitting one half of a maximally entangled pair to Bob. For phase flip errors, both protocols fail at the same point, when the probability of a flip is $\half$. Alternatively, depolarising noise
$$
\mathcal{E}(\rho)=(1-\frac{3q}{4})\rho+\frac{q}{4}\left(X\rho X+Y\rho Y+Z\rho Z\right)
$$
can be tolerated (i.e.\ non-0 distillable entanglement is distributed in a heralded way) provided
$
q<2s/(2s+1)
$
for EDSS and $q<\smallfrac[2]{3}$ for the maximally entangled state. Since $s<1$, distribution with the maximally entangled state is always more successful. EDSS is not more noise tolerant than direct transmission of entanglement. Of course, a maximally entangled state has a lot of entanglement that it can afford to lose. A fairer comparison might be with the transmission of one qubit from $\rho_{\text{SUC}}$. This is significantly less robust than using EDSS.

{\em Conclusions:} The set of Bell-diagonal states which cannot be used for the protocol of entanglement distribution via separable states have been exactly classified as those with $\lambda_1+\lambda_4=\half$. Any state not satisfying this can distribute some entanglement, making it an extremely common feature of quantum correlations, in contrast to the limited number of examples in the literature \cite{toby,bound2}. These correlations constitute a useful resource, and the amount of entanglement that can be distributed has been lower bounded, Eqn.\ (\ref{eqn:lower}). We conjecture this to be optimal, and have provided supporting evidence. Beyond resolving this conjecture, in the future it will be interesting to extend these proofs to all bipartite mixed states.

We have also seen that the quantum discord readily arises as a weak upper bound to the amount of entanglement that can be distributed. Much tighter bounds would be useful. To this end, while not a literal bound, $I_{\text{EDSS}}$ may prove useful in approximating the potential of a given system. While \cite{bound1,bound2} found a bound based on the discord, they considered a significantly different protocol, and it is a different discord (based on a bipartitioning of the tripartite system) that they used -- one which we have argued is uninformative from a resource perspective. The exception to this is Theorem 3 in \cite{bound2}, which gives the same upper bound as presented here.

\end{document}